\title{The Optimal Input Distribution for\\
Partial Decode-and-Forward in the\\
MIMO Relay Channel}
\title{The Optimal Input Distribution for Partial Decode-and-Forward in the MIMO Relay Channel}
\author{Lennart Gerdes, Christoph Hellings, Lorenz Weiland, and Wolfgang Utschick%
\thanks{%
This work was supported by the Deutsche Forschungsgemeinschaft (DFG) under grant Ut36/11.
The authors are with the \red{Fachgebiet Methoden der Signalverarbeitung}, 
Technische Universit\"at M\"unchen, 80290 M\"unchen, Germany; 
e-mail: \{gerdes, hellings, lorenz.weiland, utschick\}@tum.de.}}
\author{%
Lennart Gerdes,~\IEEEmembership{Student Member,~IEEE}, 
Christoph Hellings,~\IEEEmembership{Student Member,~IEEE},\\ 
Lorenz Weiland, and 
Wolfgang Utschick,~\IEEEmembership{Senior Member,~IEEE}%
\thanks{%
This work was supported by the Deutsche Forschungsgemeinschaft (DFG) under grant Ut36/11.
The authors are with the \red{Fachgebiet Methoden der Signalverarbeitung},
Technische Universit\"at M\"unchen, 80290 M\"unchen, Germany; 
e-mail: \{gerdes, hellings, lorenz.weiland, utschick\}@tum.de.}}
\DeclareMathAlphabet{\mathbit}{OML}{cmr}{bx}{it}
\DeclareMathOperator{\E}{E}
\DeclareMathOperator{\rank}{rank}
\newcommand{\red}{}
\newcommand{\ntilde}[1]{\expandafter\tilde#1}
\newcommand{\nbar}[1]{\expandafter\bar#1}
\newcommand{\x}{\mathbit{x}}
\newcommand{\y}{\mathbit{y}}
\newcommand{\z}{\mathbit{z}}
\newcommand{\xs}{\mathbit{x}_{\textnormal{S}}}
\newcommand{\xr}{\mathbit{x}_{\textnormal{R}}}
\newcommand{\yr}{{\mathbit{y}}_{\textnormal{R}}}
\newcommand{\yd}{{\mathbit{y}}_{\textnormal{D}}}
\renewcommand{\u}{\mathbit{u}}
\renewcommand{\v}{\mathbit{v}}
\newcommand{\q}{\mathbit{q}}
\newcommand{\nr}{{\mathbit{n}}_{\textnormal{R}}}
\newcommand{\nd}{\mathbit{n}_{\textnormal{D}}}
\newcommand{\Zr}{{\mathbit{Z}}_{\textnormal{R}}}
\newcommand{\Zd}{\mathbit{Z}_{\textnormal{D}}}
\newcommand{\Z}{{\mathbit{Z}}}
\newcommand{\Zrt}{\ntilde{\Zr}}
\newcommand{\Ha}{\mathbit{H}}
\newcommand{\Hsr}{{\mathbit{H}}_{\textnormal{SR}}}
\newcommand{\Hsd}{{\mathbit{H}}_{\textnormal{SD}}}
\newcommand{\Hrd}{\mathbit{H}_{\textnormal{RD}}}
\newcommand{\HSRd}{[\Hsd, \Hrd]}
\newcommand{\Usr}{\mathbit{U}_{\textnormal{SR}}}
\newcommand{\Usd}{\mathbit{U}_{\textnormal{SD}}}
\newcommand{\Vsr}{\mathbit{V}_{\textnormal{SR}}}
\newcommand{\Vsd}{\mathbit{V}_{\textnormal{SD}}}
\newcommand{\Ssr}{\mathbit{\Sigma}_{\textnormal{SR}}}
\newcommand{\Ssd}{\mathbit{\Sigma}_{\textnormal{SD}}}
\newcommand{\Bsr}{\mathbit{B}_{\textnormal{SR}}}
\newcommand{\Bsd}{\mathbit{B}_{\textnormal{SD}}}
\newcommand{\Nd}{N_{\textnormal{D}}}
\newcommand{\Ns}{N_{\textnormal{S}}}
\newcommand{\Nr}{N_{\textnormal{R}}}
\newcommand{\setC}{\mathbb{C}}
\newcommand{\setR}{\mathbb{R}}
\newcommand{\CN}{\mathcal{N}_{\mathbb{C}}}
\newcommand{\lam}{\mathbit{\Lambda}}
\newcommand{\zeros}{\boldsymbol{0}}
\newcommand{\id}{\mathbf{I}}
\newcommand{\A}{\mathbit{A}}
\renewcommand{\S}{\mathbit{S}}
\newcommand{\C}{\mathbit{C}}
\newcommand{\Cs}{\mathbit{C}_{\textnormal{S}}}
\newcommand{\Cr}{\mathbit{C}_{\textnormal{R}}}
\newcommand{\Cq}{\mathbit{C}_{\textnormal{Q}}}
\newcommand{\Cu}{\mathbit{C}_{\textnormal{U}}}
\newcommand{\Cur}{\mathbit{C}_{\text{UR}}}
\newcommand{\Cugr}{\mathbit{C}_{\text{U}\mid\text{R}}}
\newcommand{\Cv}{\mathbit{C}_{\textnormal{V}}}
\newcommand{\Q}{\mathbit{Q}}
\newcommand{\Ds}{\mathbit{D}_{\text{S}}}
\newcommand{\Dr}{\mathbit{D}_{\text{R}}}
\newcommand{\st}{\text{s.\,t.}}
\newcommand{\he}{\text{H}}
\newcommand{\trace}{\text{tr}}
\newcommand{\PS}{P_{\text{S}}}
\newcommand{\PR}{P_{\text{R}}}
\newcommand{\psd}{\succeq}
\newcommand{\nsd}{\preceq}
\newcommand{\posd}{\succ}
\newcommand{\markov}{\leftrightarrow}
\newcommand{\rdf}{R_{\textnormal{DF}}}
\newcommand{\rpdf}{R_{\textnormal{PDF}}}
\newcommand{\rpdfn}{R_{\textnormal{PDF}}^{\,\CN}}
\newcommand{\rp}{R_{\textnormal{P2P}}}
\newcommand{\csb}{C_{\textnormal{CSB}}}
\newtheorem{theorem}{Theorem}
\newtheorem{definition}{Definition}
\begin{document}
\maketitle



\begin{abstract}
This paper considers the partial decode-and-forward (PDF) strategy for the Gaussian multiple-input multiple-output (MIMO) relay channel. Unlike for the decode-and-forward~(DF) strategy or point-to-point (P2P) transmission, for which Gaussian channel inputs are known to be optimal, the input distribution that maximizes the achievable PDF rate for the Gaussian MIMO relay channel has remained unknown so far. For some special cases, e.g., for relay channels where the optimal PDF strategy reduces to DF or P2P transmission, it could be deduced that Gaussian inputs maximize the PDF rate. For the general case, however, the problem has remained open until now. In this work, we solve this problem by proving that the maximum achievable PDF rate for the Gaussian MIMO relay channel is always attained by Gaussian channel inputs. Our proof relies on the channel enhancement technique, which was originally introduced by Weingarten et al. to derive the (private message) capacity region of the Gaussian MIMO broadcast channel. By combining this technique with a primal decomposition approach, we first establish that jointly Gaussian source and relay inputs maximize the achievable PDF rate for the aligned Gaussian MIMO relay channel. Subsequently, we use a limiting argument to extend this result from the aligned to the general Gaussian MIMO relay channel. 
\end{abstract}

\begin{IEEEkeywords}
Gaussian relay channel, MIMO, partial decode-and-forward, optimal channel input distribution, channel enhancement.
\end{IEEEkeywords}

\section{Introduction}\label{sec:introduction}
 
This work considers the Gaussian multiple-input multiple-output~(MIMO) relay channel, a three-node network where one source wants to convey information to one destination with the help of a single relay. All three nodes may be equipped with multiple antennas and they are connected by additive Gaussian noise channels. Furthermore, it is assumed that the relay does not have own information to transmit or receive so that its only purpose is to assist the communication from the source to the destination.

The concept of relaying traces back to van der Meulen~\cite{Meulen71Three-TerminalCommunicationChannels}, who introduced the first information theoretic model for the relay channel. While the capacity of the relay channel is still unknown, substantial advances towards its information theoretic understanding have since been made. The most important work on the relay channel is by Cover and El Gamal~\cite{Cover79CapacityTheoremsRelay}, who derived a capacity upper bound and achievable rates based on a then new \emph{cut-set bound}~(CSB) and two coding schemes that are nowadays referred to as \emph{decode-and-forward}~(DF) and \emph{compress-and-forward}~(CF), respectively. The DF strategy requires the relay to decode the entire source message, which is then re-encoded and, in cooperation with the source, transmitted to the destination. When using CF, the relay reliably forwards an estimate, i.e., a compressed version of its received signal, to the destination. In~\cite{Kramer05CooperativeStrategiesand}, these two basic strategies were generalized to various relay channel models that include multiple sources, relays, or destinations.

In their pioneering work, Cover and El Gamal also proposed a more general coding scheme that combines the DF and CF strategies~\cite[Theorem~7]{Cover79CapacityTheoremsRelay}. If the relay uses this strategy, it decodes only a part of the source message and compresses the remainder. The \emph{partial decode-and-forward} (PDF) scheme is a special case of this mixed strategy where the relay only forwards information about the part of the source message it has decoded. We remark that PDF in turn includes the DF strategy and \emph{point-to-point} (P2P) transmission from source to destination as special cases. Since the PDF scheme allows to optimize the amount of information the relay must decode, it provides the possibility to tradeoff sending information via the relay versus sending it over the direct link. In particular, equipping all nodes with multiple antennas creates spatial degrees of freedom which the PDF scheme may exploit to outperform the DF scheme.

Upper and lower bounds on the capacity of the Gaussian MIMO relay channel were first studied in~\cite{Wang05CapacityofMIMO}, where it was shown that Gaussian channel inputs maximize both the CSB and the achievable DF rate. Furthermore, a generally loose upper bound on the CSB was established and evaluated, and different lower bounds on the capacity based on suboptimal DF strategies or P2P transmission were also derived. Using the fact that Gaussian channel inputs maximize the CSB and the achievable DF rate, it was then independently shown in~\cite{Ng11TransmitSignaland} and~\cite{Gerdes11OptimizedCapacityBoundsa} that, if perfect \emph{channel state information} (CSI) is available at all nodes, the corresponding optimal values can be determined as the solutions of convex optimization problems.

Employing PDF in the Gaussian MIMO relay channel was first considered in~\cite{Lo08RateBoundsMIMO}, where the strategy was termed ``transmit-side message splitting''. The authors formulated the PDF rate maximization problem for jointly Gaussian source and relay inputs, but they did not solve the resulting nonconvex problem. In addition, no attempt was made to characterize the input distribution that maximizes the achievable PDF rate. Rather, proper complex Gaussian channel inputs were assumed as part of the system model. If the channel inputs are restricted to be complex Gaussian, it was then shown in~\cite{Hellings14OptimalGaussianSignaling} that jointly proper source and relay inputs are indeed optimal. For the general case, however, the optimal input distribution has been unknown so far. Consequently, it has not been possible to characterize the maximum achievable PDF rate for the general Gaussian MIMO relay channel.

This is in contrast to the CSB, the DF rate, and the P2P capacity, for which it is well known that Gaussian inputs are optimal, cf.~\cite{Telatar99CapacityofMulti-antenna, Wang05CapacityofMIMO}. The maximum achievable PDF rate for the Gaussian MIMO relay channel can thus be characterized whenever the optimal PDF strategy is equivalent to the DF strategy or P2P transmission, or if PDF achieves the CSB. Such special cases include the (physically) degraded and the reversely degraded relay channel~\cite{Cover79CapacityTheoremsRelay}, the semideterministic relay channel~\cite{ElGamal82CapacityofSemideterministic}, the relay channel with orthogonal components~\cite{ElGamal05CapacityofClass}, as well as the stochastically degraded and the reversely stochastically degraded relay channel~\cite{Gerdes14OptimalPartialDecode-and-Forward}.\footnote{We remark that some of these special cases are only of theoretic interest. In particular, the Gaussian MIMO relay channel is never (physically) degraded or reversely degraded if the relay and destination noise vectors are assumed to be independent, and it is never semideterministic unless the relay does not experience any noise.} Moreover, the achievable PDF rate is also maximized by Gaussian channel inputs if the row spaces of the source-to-relay and the source-to-destination channel gain matrices are disjoint~\cite{Gerdes14OptimalPartialDecode-and-Forwarda}.

In this paper, we generalize these previous results by showing that the maximum achievable PDF rate for the Gaussian MIMO relay channel is always attained by jointly Gaussian source and relay inputs. To this end, we first establish that jointly Gaussian source and relay inputs maximize the achievable PDF rate for the \emph{aligned} Gaussian MIMO relay channel, which constitutes the main challenge of the proof. Subsequently, we use a limiting argument to extend this result from the aligned to the general Gaussian MIMO relay channel. We remark that the idea to first consider an aligned channel was introduced by Weingarten et al.~\cite{Weingarten06CapacityRegionof} to derive the (private message) capacity region of the Gaussian MIMO broadcast channel.

The proof that Gaussian channel inputs maximize the achievable PDF rate for the aligned Gaussian MIMO relay channel requires a large variety of ingredients. For the achievability part, we simply use jointly Gaussian source and relay inputs. The converse is based on a \emph{channel enhancement} argument, which, like the idea to first consider the aligned channel, goes back to~\cite{Weingarten06CapacityRegionof}. More specifically, the enhanced aligned relay channel we consider in the converse is stochastically degraded. As a result, its maximum achievable PDF rate is attained by a pure DF strategy~\cite{Gerdes14OptimalPartialDecode-and-Forward}, for which Gaussian channel inputs are known to be optimal~\cite{Wang05CapacityofMIMO}. Finally, the key to proving that achievability and converse meet is a primal decomposition approach, which we use to split the complicated PDF rate maximization into subproblems. In a slightly different context, this decomposition was first proposed in our previous work~\cite{Hellings14OptimalGaussianSignaling}. Therein, it enabled us to show and exploit the mathematical equivalence between one of the resulting subproblems and a sum rate maximization problem for a Gaussian MIMO broadcast channel with dirty paper coding. For the proof presented in this paper, however, we instead obtain a subproblem that is mathematically equivalent to the problem of finding the secrecy capacity of the aligned Gaussian MIMO wiretap channel (vector Gaussian wiretap channel) under shaping constraints~\cite[Section~II-A]{Liu09NoteSecrecyCapacity}. To facilitate the proof that jointly Gaussian source and relay inputs maximize the achievable PDF rate for the aligned Gaussian MIMO relay channel, we can thus adopt considerations that were used in the derivation of~\cite[Theorem~2]{Liu09NoteSecrecyCapacity}. 

\emph{Notation:} $\setR_{+}$ stands for the set of nonnegative real numbers. Matrices are denoted by bold capital letters, vectors by bold lowercase characters. The identity matrix and the all-zeros matrix/vector are represented by $\id$ and $\zeros$, respectively, where the dimensions are indicated by subscripts if necessary. $\mathbit{A}^{\he}$, $\mathbit{A}^{-1}$, $\mathbit{A}^{+}$, $|\mathbit{A}|$, and $\trace(\mathbit{A})$ denote the conjugate transpose, inverse, Moore-Penrose pseudoinverse, determinant, and trace of matrix $\mathbit{A}$, while $\mathbit{A} \psd \mathbit{B}$ and $\mathbit{A} \posd \mathbit{B}$ mean that $\mathbit{A} - \mathbit{B}$ is positive semidefinite (nonnegative definite) and positive definite, respectively. $\E[\cdot]$ is the expectation operator and $\mathbit{x} \sim \CN(\zeros, \mathbit{C})$ means that $\mathbit{x}$ is a zero-mean proper (circularly symmetric) complex Gaussian random vector with covariance matrix $\mathbit{C}$. Finally, $I(\x;\y|\z)$ is the conditional mutual information of $\x$ and $\y$ given $\z$, and $h(\x|\z)$ denotes the conditional differential entropy of $\x$ given $\z$.

\section{System Model}\label{sec:model}

The channel model for the Gaussian MIMO relay channel, which is illustrated in Figure~\ref{fig:RC}, is obtained by applying the linear MIMO model to the considered relay scenario. The receive signal vectors of the relay and the destination can thus be expressed as
\begin{align}\begin{alignedat}{3}\label{eq:RC}
	\yr &= \Hsr \xs + \nr, \qquad& \nr &\sim \CN(\zeros,\Zr), \\
	\yd &= \Hsd \xs + \Hrd \xr + \nd, \qquad& \nd &\sim \CN(\zeros,\Zd),
\end{alignedat}\end{align}
where $\Hsr \in \setC^{\Nr \times \Ns}$, $\Hsd \in \setC^{\Nd \times \Ns}$, and $\Hrd \in \setC^{\Nd \times \Nr}$ represent the channel gain matrices of appropriate dimensions, which are assumed to be perfectly and instantaneously known at all nodes. Moreover, $\nr \sim \CN(\zeros,\Zr)$ and $\nd \sim \CN(\zeros,\Zd)$ denote zero-mean proper complex Gaussian noise vectors with nonsingular covariance matrices $\Zr \in \setC^{\Nr \times \Nr}$ and $\Zd \in \setC^{\Nd \times \Nd}$. The noise vectors are independent of each other and independent of the transmit signals $\xs \in \setC^{\Ns}$ and $\xr \in \setC^{\Nr}$. Finally, perfectly synchronized transmission and reception between all nodes is assumed, and it is implicit in~\eqref{eq:RC} that the relay operates in full-duplex mode and is able to completely cancel its own self-interference.
\begin{figure}[t]
	\centering
	\begin{tikzpicture}[scale=0.5,>=stealth']
		\tikzset{device/.style={circle,line width=1.25pt,minimum width=1.95em,draw}}                                       
		\node[device] (source) at (0,0) [label={[inner sep=0,yshift=-.5ex]270:$\xs$}] {S};
		\node[device] (relay) at (4.5,2.5) [label={[inner sep=0,yshift=.5ex]90:$\yr \mid \xr$}] {R};
		\node[device] (destination) at (9,0) [label={[inner sep=0,yshift=-.5ex]270:$\yd$}] {D};
		\draw [line width=1.25pt,->] (source) to node [auto] {$\Hsr$} (relay);
		\draw [line width=1.25pt,->] (source) to node [below] {$\Hsd$} (destination);
		\draw [line width=1.25pt,->] (relay) to node [auto] {$\Hrd$} (destination);
	\end{tikzpicture}
	\caption{Illustration of the Gaussian MIMO Relay Channel}
	\label{fig:RC}
\end{figure}
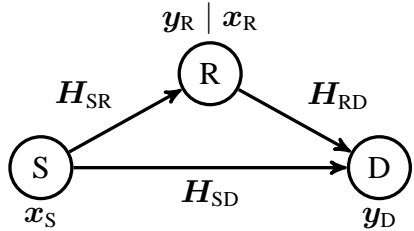

Without further conditions on the channel inputs $\xs$ and $\xr$, the capacity of the Gaussian MIMO relay channel is infinite. That is because one can then choose infinite subsets of inputs arbitrarily far apart so that they are distinguishable at the outputs with arbitrarily small probability of error, cf.~\cite[Chapter~9]{Cover2006}. We therefore impose the transmit power constraints
\begin{align}
	\E[\xs^{\he}\xs] \leq \PS, \qquad \E[\xr^{\he}\xr] \leq \PR
\end{align}
on the channel inputs, where $\PS>0$ and $\PR>0$ denote the power budgets available at the source and the relay, respectively.

Note that, without loss of generality, we can restrict our attention to zero-mean channel inputs as it is clear that the optimal $\xs$ and $\xr$ are always zero-mean. The reason for this is that channel inputs with nonzero mean consume more transmit power than the corresponding zero-mean signals, but they cannot convey more information since translations do not change the differential entropy of continuous random vectors, cf.~\cite[Theorem~8.6.3]{Cover2006}. As a consequence, the covariance matrices of the source and relay inputs are given by $\Cs = \E[\xs \xs^{\he}]$ and $\Cr = \E[\xr \xr^{\he}]$ so that the transmit power constraints can equivalently be expressed as
\begin{align}
	\trace(\Cs) \leq \PS, \qquad \trace(\Cr) \leq \PR.
\end{align}

\section{Partial Decode-and-Forward (PDF)}

The partial decode-and-forward~(PDF) strategy can be viewed as a generalization of the well-known decode-and-forward~(DF) scheme. When using DF, the relay is required to decode the entire message transmitted by the source, even if this means that the source-to-relay link becomes the bottleneck of the communication. One way to overcome this problem is to allow the relay to partially decode the source message. For this purpose, the message $W$ that is to be transmitted from the source to the destination is split into two independent parts $W'$ and $W''$, of which the relay is only required to decode $W'$. By constructing separate codebooks for $W'$ and $W''$ and using superposition coding at the source, a PDF scheme that achieves all rates smaller than or equal to
\begin{multline}\label{eq:pdf-opt}
	\rpdf = \max_{p(\u,\xs,\xr)} \min\, \bigl\{ I(\u; \yr | \xr) + I(\xs; \yd | \u,\xr), I(\xs, \xr; \yd) \bigr\} \\
	\st \quad \u \markov (\xs,\xr) \markov (\yd,\yr), \quad \trace(\Cs) \leq \PS, \quad \trace(\Cr) \leq \PR
\end{multline}
is obtained as shown in~\cite[Section~9.4.1]{Kramer07TopicsinMulti-User}. Here, $\u$ is an auxiliary variable representing the part of the source message the relay must decode. In addition to the power constraints, the maximization over the joint distribution of $\u$, $\xs$, and $\xr$ is subject to the constraint that $\u \markov (\xs,\xr) \markov (\yr,\yd)$ forms a Markov chain.

It is easy to see that DF is a special case of PDF for which $\u = \xs$. In particular, choosing $\u = \xs$ in~\eqref{eq:pdf-opt} yields that
\begin{align}\label{eq:df}
	\rdf = \max_{p(\xs,\xr)} \min\, \bigl\{ 
	I(\xs; \yr | \xr), I(\xs, \xr; \yd) \bigr\}
	\quad \st \quad \trace(\Cs) \leq \PS, \quad \trace(\Cr) \leq \PR
\end{align}
is achievable by means of the PDF strategy, where $\rdf$ denotes the maximum achievable DF rate for the Gaussian MIMO relay channel, cf.~\cite[Theorem~1]{Cover79CapacityTheoremsRelay}. Moreover, choosing $\u = \zeros$ yields the point-to-point~(P2P) capacity of the source-to-destination link
\begin{equation}\begin{split}
	\rp 
		&= \max_{p(\xs)} \, I(\xs; \yd | \xr=\zeros)
		\quad \st \quad \trace(\Cs) \leq \PS.
\end{split}\end{equation}
Therefore, it is clear that PDF always achieves at least the maximum of the rates that are achievable by means of DF and by means of direct transmission from source to destination, i.e.,
\begin{equation}
	\rpdf \geq \max \left\{ \rdf, \rp \right\}.
\end{equation}

In contrast to $\rdf$, $\rp$, and the cut-set bound~(CSB), which is given by~\cite[Theorem~4]{Cover79CapacityTheoremsRelay}
\begin{align}\label{eq:csb}
	\! \csb = \max_{p(\xs,\xr)} \min\, \bigl\{ 
	I(\xs; \yr, \yd | \xr), I(\xs, \xr; \yd) \bigr\}
	\quad \st \quad \trace(\Cs) \leq \PS, \quad \trace(\Cr) \leq \PR,
\end{align}
we cannot simply invoke the entropy maximizing property of the zero-mean proper (circularly symmetric) complex Gaussian distribution (cf.~\cite{Telatar99CapacityofMulti-antenna, Neeser93ProperComplexRandom}) to argue that $\rpdf$ is maximized by  Gaussian channel inputs. The reason for this is that the entropy maximizing property cannot be applied to the term 
\begin{align}\begin{split}
	I(\u; \yr | \xr) + I(\xs; \yd | \u, \xr) &= 
	h(\Hsr \xs + \nr | \xr) - h(\nd) \\
	&+ h(\Hsd \xs + \nd | \u, \xr) - h(\Hsr \xs + \nr | \u, \xr),
\end{split}\end{align}
which includes the difference $h(\Hsd \xs + \nd | \u, \xr) - h(\Hsr \xs + \nr | \u, \xr)$ of two conditional differential entropies involving $\u$, $\xs$, and $\xr$.

For $\Hsr = \Hsd = \id$, the maximization of such a difference over the conditional probability distribution $p(\xs | \u, \xr)$ subject to shaping constraints on the conditional covariance matrix $\E[\xs \xs^{\he} | \u, \xr]$ was analyzed in~\cite[Theorem~8]{Liu07ExtremalInequalityMotivated}, where it is proved that the optimal distribution is Gaussian. However, the term $I(\u; \yr | \xr) + I(\xs; \yd | \u, \xr)$, and hence the difference $h(\Hsd \xs + \nd | \u, \xr) - h(\Hsr \xs + \nr | \u, \xr)$, is only one part of the objective function of the PDF rate maximization problem given in~\eqref{eq:pdf-opt}. Therefore, we cannot directly apply~\cite[Theorem~8]{Liu07ExtremalInequalityMotivated} to prove that the achievable PDF rate is maximized by Gaussian channel inputs. Rather, we need to establish achievability and converse for the whole objective function.

\section{Aligned Gaussian MIMO Relay Channel}

As a first step towards a characterization of the maximum achievable PDF rate for the Gaussian MIMO relay channel, we consider the \emph{aligned} Gaussian MIMO relay channel. The results for this special case are then generalized in the following section.

\begin{definition}
The Gaussian MIMO relay channel is said to be aligned if $\Ns = \Nr = \Nd = N$ and $\Hsr = \Hsd = \id_{N}$.
\end{definition}

The channel model for the aligned Gaussian MIMO relay channel is hence given by
\begin{align}\begin{alignedat}{3}\label{eq:alignedRC}
	\yr &= \xs + \nr, \qquad& \nr &\sim \CN(\zeros,\Zr), \\
	\yd &= \xs + \Hrd \xr + \nd, \qquad& \nd &\sim \CN(\zeros,\Zd).
\end{alignedat}\end{align}
As the theorem below reveals, Gaussian channel inputs maximize the achievable PDF rate for this particular relay channel. 

\begin{theorem}\label{thm:aligned}
For the aligned Gaussian MIMO relay channel, the maximum achievable PDF rate is attained by jointly proper complex Gaussian source and relay inputs.
\end{theorem}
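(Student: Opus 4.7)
The plan is to split the argument into achievability and converse. Achievability is immediate: restricting the maximization in~\eqref{eq:pdf-opt} to jointly proper complex Gaussian triples $(\u,\xs,\xr)$ compatible with the Markov chain and the power constraints yields a lower bound $\rpdfn$ that admits a closed-form expression in terms of determinants of noise-plus-signal covariance matrices. The substantive content is the converse: showing that no non-Gaussian input distribution exceeds $\rpdfn$. I would attack this via a primal decomposition, maximizing first over a finite set of covariance-type ``shaping'' parameters and then, with those fixed, over the remaining freedom in the joint distribution of $(\u,\xs,\xr)$, in the spirit of~\cite{Hellings14OptimalGaussianSignaling} but with a different inner subproblem.

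Concretely, I would fix $\Cs = \E[\xs\xs^\he]$, $\Cr = \E[\xr\xr^\he]$, the cross-covariance $\E[\xs\xr^\he]$, and a matrix $\Q$ upper-bounding the averaged conditional covariance $\E[\mathrm{cov}(\xs|\u,\xr)]$. Using $\yr = \xs + \nr$ and $\yd = \xs + \Hrd\xr + \nd$ for the aligned channel, the inner objective in~\eqref{eq:pdf-opt} decomposes as
\begin{equation*}
I(\u;\yr|\xr) + I(\xs;\yd|\u,\xr) = h(\xs+\nr|\xr) - h(\nd) + \bigl[h(\xs+\nd|\u,\xr) - h(\xs+\nr|\u,\xr)\bigr],
\end{equation*}
while $I(\xs,\xr;\yd) = h(\yd) - h(\nd)$. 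All terms outside the brackets depend only on the fixed second-order statistics and are already maximized by Gaussian inputs via the entropy-maximizing property. Thus, up to constants, the inner problem reduces to
\begin{equation*}
\max\; h(\xs+\nd|\u,\xr) - h(\xs+\nr|\u,\xr) \quad \st \quad \E[\mathrm{cov}(\xs|\u,\xr)] \preceq \Q.
\end{equation*}

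This subproblem has exactly the form of the secrecy capacity of the aligned Gaussian MIMO wiretap channel under a covariance shaping constraint, with $\nd$ playing the role of the legitimate-receiver noise and $\nr$ that of the eavesdropper noise, as analyzed in~\cite[Section~II-A, Theorem~2]{Liu09NoteSecrecyCapacity}. Following that template, I would construct an enhanced relay-link noise covariance $\Zrt \preceq \Zr$ by KKT-matching against the outer Gaussian rate expression, so that (i) the Gaussian value of the bracketed entropy difference is preserved when $\Zr$ is replaced by $\Zrt$, and (ii) the resulting enhanced aligned relay channel is stochastically degraded in the sense of~\cite{Gerdes14OptimalPartialDecode-and-Forward}. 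On the enhanced channel the optimal PDF strategy collapses to pure DF, and for DF on the Gaussian MIMO relay channel Gaussian inputs are known to be optimal~\cite{Wang05CapacityofMIMO}. Since enhancement can only shrink noise, the enhanced PDF rate upper-bounds the original one, and by construction this upper bound coincides with $\rpdfn$, closing the converse.

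The main obstacle will be the explicit construction of $\Zrt$ and the verification of its compatibility with the full problem rather than just the wiretap subproblem. The PDF rate involves two mutual informations joined by a $\min$, an additional relay covariance $\Cr$ together with the matrix $\Hrd$, and a Markov-chain constraint; the enhancement must be chosen so that the bracketed entropy difference is preserved, the Gaussian value of $h(\xs+\nr|\xr)$ is not exceeded, and the cut-set-like term $I(\xs,\xr;\yd)$ remains controlled. Writing out the KKT conditions for the outer Gaussian problem and selecting $\Zrt$ to match the Lagrange multipliers in the spirit of Weingarten et al.~\cite{Weingarten06CapacityRegionof} should deliver the required enhancement, but reconciling both branches of the $\min$ after enhancement is the technical crux of the argument.
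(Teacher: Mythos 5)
Your proposal follows essentially the same route as the paper's proof: Gaussian achievability, a primal decomposition that exposes an inner subproblem equivalent to the aligned Gaussian MIMO wiretap channel under shaping constraints (destination as legitimate receiver, relay as eavesdropper), a KKT-derived enhanced relay noise covariance $\Zrt = (\Zd^{-1}+\lam_{1})^{-1} \nsd \Zr,\, \Zd$, stochastic degradedness of the enhanced channel so that PDF collapses to DF with Gaussian inputs optimal, and a matching of the enhanced Gaussian DF rate with $\rpdfn$. The ``crux'' you flag resolves exactly as you anticipate: since only the relay noise is enhanced, the second branch of the $\min$ is literally unchanged, and the determinant identities implied by the KKT conditions of the inner wiretap subproblem show that the enhanced Gaussian DF rate coincides with the original Gaussian PDF rate.
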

\begin{proof}
\red{\emph{Achievability:}}
Let $\q \sim \CN(\zeros,\Cq)$, $\v \sim \CN(\zeros,\Cv)$, $\xr \sim \CN(\zeros,\Cr)$ be independent, $\u = \q + \A\xr$, and $\xs = \u + \v$ such that $\xs \sim \CN(\zeros,\Cs)$ with $\Cs = \Cq + \A \Cr \A^{\he} + \Cv$. Then,
\begin{multline}
\begin{aligned}
	\rpdf \geq \rpdfn = \max_{\Cq, \Cv, \Cr, \A} \, \min \, \Biggl\{
	&\log \frac{|\Cq + \Cv + \Zr|}{|\Cv + \Zr|} + \log \frac{|\Cv + \Zd|}{|\Zd|}, \\
	&\log \frac{|\Cq + \Cv + (\Hrd+\A) \Cr (\Hrd+\A)^{\he} + \Zd|}{|\Zd|}	 \Biggr\}
\end{aligned}\\
	\st \quad \Cq, \Cv, \Cr \psd \zeros, \quad \trace(\Cq + \Cv + \A \Cr \A^{\he}) \leq \PS, \quad \trace(\Cr) \leq \PR,
\end{multline}
where $\rpdfn$ denotes a PDF rate that is achievable with proper complex Gaussian channel inputs. \red{By introducing an auxiliary variable $\S = \Cq + \Cv \psd \zeros$, this achievable rate can equivalently be expressed as
\begin{multline}
\begin{aligned}
	\rpdfn = \max_{\S, \Cq, \Cv, \Cr, \A} \, \min \, \Biggl\{
	&\log \frac{|\S + \Zr|}{|\Cv + \Zr|} + \log \frac{|\Cv + \Zd|}{|\Zd|}, \\
	&\log \frac{|\S + (\Hrd+\A) \Cr (\Hrd+\A)^{\he} + \Zd|}{|\Zd|}	 \Biggr\}
\end{aligned}\\
	\st \quad \S, \Cq, \Cv, \Cr \psd \zeros, \quad \Cq + \Cv = \S, \quad 
	\trace(\S + \A \Cr \A^{\he}) \leq \PS, \quad \trace(\Cr) \leq \PR.
\end{multline}
If we now apply the primal decomposition approach that was already considered in~\cite{Hellings14OptimalGaussianSignaling} to this problem, we obtain
\begin{align}\label{eq:primal-decomposition}
	\rpdfn = \max_{\S} \, \rpdfn(\S) \quad \st \quad \S \psd \zeros, 
	\quad \trace(\S) \leq \PS
\end{align}
with
\begin{multline}
\begin{aligned}
	\rpdfn(\S) = \max_{\Cq, \Cv, \Cr, \A} \, \min \, \Biggl\{
	&\log \frac{|\S + \Zr|}{|\Zd|} + \log \frac{|\Cv + \Zd|}{|\Cv + \Zr|}, \\
	&\log \frac{|\S + (\Hrd+\A) \Cr (\Hrd+\A)^{\he} + \Zd|}{|\Zd|} \Biggr\}
\end{aligned}\\
	\st \quad \Cq, \Cv, \Cr \psd \zeros, \quad \Cq + \Cv = \S, \quad 
	\trace(\A \Cr \A^{\he}) \leq \PS - \trace(\S), \quad \trace(\Cr) \leq \PR.
\end{multline}
Note that $\Cq$ only appears in the constraints $\Cq \psd \zeros$ and $\Cq + \Cv = \S$, i.e., it is a slack variable and can be eliminated, after which the equality constraint becomes $\Cv \nsd \S$. Furthermore, $\Cv$ only contributes to the second summand of the first term inside the minimum of the objective function so that $\rpdfn(\S)$ is equal to
\begin{multline}
\begin{aligned}
	\rpdfn(\S) = \max_{\Cr, \A} \, \min \, \Biggl\{
	&\log \frac{|\S + \Zr|}{|\Zd|} + \max_{\zeros \nsd \Cv \nsd \S} \ 
	\log \frac{|\Cv + \Zd|}{|\Cv + \Zr|}, \\
	&\log \frac{|\S + (\Hrd+\A) \Cr (\Hrd+\A)^{\he} + \Zd|}{|\Zd|} \Biggr\}
\end{aligned}\\
	\st \quad \Cr \psd \zeros, \quad \trace(\A \Cr \A^{\he}) \leq \PS - \trace(\S),
	\quad \trace(\Cr) \leq \PR.
\end{multline}
}

In order to \red{further} simplify this expression, consider the inner maximization problem
\begin{align}\label{eq:problemCv}
	\max_{\Cv} \, \log \frac{|\Cv + \Zd|}{|\Cv + \Zr|} \quad \st \quad 
	\zeros \nsd \Cv \nsd \S,
\end{align}
which up to the additive constant $\log \left(|\Zr|/|\Zd|\right)$ is mathematically equivalent to the problem that yields the secrecy capacity of the aligned Gaussian MIMO wiretap channel (vector Gaussian wiretap channel) under shaping constraints, cf.~\cite[Section~II-A]{Liu09NoteSecrecyCapacity}.\footnote{Comparing~\eqref{eq:problemCv} to~\cite[eq.~(17)]{Liu09NoteSecrecyCapacity}, we see that the destination plays the role of the legitimate receiver and the relay that of the eavesdropper.} Following the proof of~\cite[Theorem~2]{Liu09NoteSecrecyCapacity}, which carries over to the complex-valued setting under consideration here, we can determine the optimal value of the inner problem~\eqref{eq:problemCv}.

To this end, first note that the Karush--Kuhn--Tucker (KKT) conditions are necessary for problem~\eqref{eq:problemCv} since the Abadie constraint qualification is automatically satisfied if all constraints are linear~\cite[Section~5.1]{Bazaraa2006} and since the KKT conditions readily extend to problems with generalized inequalities such as positive semidefiniteness constraints~\cite[Section~5.9.2]{Boyd2004}. Thus, any optimizer $\Cv^{\star}$ of problem~\eqref{eq:problemCv} must satisfy
\begin{align}
	(\Cv^{\star} + \Zd)^{-1} + \lam_{1} &= (\Cv^{\star} + \Zr)^{-1} + \lam_{2},
	 \label{eq:DF} \\
	\Cv^{\star} \lam_{1} &= \zeros, \label{eq:CS1} \\
	(\S - \Cv^{\star}) \lam_{2} &= \zeros,  \label{eq:CS2}
\end{align}
where $\lam_{1} \psd \zeros$ and $\lam_{2} \psd \zeros$ denote the Lagrangian multipliers corresponding to the (generalized) inequality constraints $\Cv \psd \zeros$ and $\S - \Cv \psd \zeros$, respectively. Now, let $\Z$ such that 
\begin{align}\label{eq:defZ}
	(\Cv^{\star} + \Z)^{-1} = (\Cv^{\star} + \Zd)^{-1} + \lam_{1}.
\end{align}
It then follows from~\eqref{eq:CS1} that an explicit expression for $\Z$ (as a function of $\Zd$ and the Lagrangian multiplier $\lam_{1}$) is given by
\begin{align}\label{eq:Z}
	\Z = (\Zd^{-1} + \lam_{1})^{-1}.
\end{align}
Since $\lam_{1} \psd \zeros$, we can conclude that $\Z \posd \zeros$. Furthermore,~\eqref{eq:DF} and the definition of $\Z$ in~\eqref{eq:defZ} imply that
\begin{align}\label{eq:defZ2}
	(\Cv^{\star} + \Z)^{-1} = (\Cv^{\star} + \Zr)^{-1} + \lam_{2}.
\end{align}

By means of the variable $\Z$, we can characterize the optimal value of problem~\eqref{eq:problemCv} as follows. First, note that
\begin{align}\begin{split}
	(\Cv^{\star} + \Z) \Z^{-1}
	&= \Cv^{\star} \Z^{-1} + \id \\
	&\overset{\eqref{eq:Z}}{=} \Cv^{\star} (\Zd^{-1} + \lam_{1}) + \id \\
	&\overset{\eqref{eq:CS1}}{=} \Cv^{\star} \Zd^{-1} + \id \\
	&= (\Cv^{\star} + \Zd) \Zd^{-1},
\end{split}\end{align}
which implies
\begin{align}\label{eq:equiv2}
	\frac{|\Cv^{\star} + \Zd|}{|\Zd|} = \frac{|\Cv^{\star} + \Z|}{|\Z|}.
\end{align}
Similarly, it holds that
\begin{align}\begin{split}
	(\S + \Z) (\Cv^{\star} + \Z)^{-1}
	&= (\S - \Cv^{\star} + \Cv^{\star} + \Z) (\Cv^{\star} + \Z)^{-1} \\
	&= (\S - \Cv^{\star}) (\Cv^{\star} + \Z)^{-1} + \id \\
	&\overset{\eqref{eq:defZ2}}{=} (\S - \Cv^{\star}) \left( (\Cv^{\star} + \Zr)^{-1} + \lam_{2} \right) + \id \\
	&\overset{\eqref{eq:CS2}}{=} (\S - \Cv^{\star}) (\Cv^{\star} + \Zr)^{-1} + \id \\
	&= (\S - \Cv^{\star} + \Cv^{\star} + \Zr) (\Cv^{\star} + \Zr)^{-1} \\
	&= (\S + \Zr) (\Cv^{\star} + \Zr)^{-1},
\end{split}\end{align}
from which we obtain
\begin{align}\label{eq:equiv1}
	\frac{|\Cv^{\star} + \Z|}{|\Cv^{\star} + \Zr|} = 
	\frac{|\S + \Z|}{|\S + \Zr|}.
\end{align}
The optimal value of problem~\eqref{eq:problemCv} can be therefore calculated as
\begin{align}\begin{split}
	\log \frac{|\Cv^{\star} + \Zd|}{|\Cv^{\star} + \Zr|}
	&= \log \frac{|\Cv^{\star} + \Zd|}{|\Zd|} 
		- \log \frac{|\Cv^{\star} + \Zr|}{|\Zd|} \\
	&\overset{\eqref{eq:equiv2}}{=} \log \frac{|\Cv^{\star} + \Z|}{|\Z|} 
		- \log \frac{|\Cv^{\star} + \Zr|}{|\Zd|} \\
	&= \log \frac{|\Cv^{\star} + \Z|}{|\Cv^{\star} + \Zr|} 
		- \log \frac{|\Z|}{|\Zd|} \\
	&\overset{\eqref{eq:equiv1}}{=} \log \frac{|\S + \Z|}{|\S + \Zr|} 
		- \log \frac{|\Z|}{|\Zd|} \\
	&= \log \frac{|\S + \Z|}{|\Z|}
		- \log \frac{|\S + \Zr|}{|\Zd|}.
\end{split}\end{align}
Using this result, it is straightforward to verify that $\rpdfn(\S)$ is equal to
\begin{multline}\label{eq:achievable-OP}
	\rpdfn(\S) = \max_{\Cr, \A} \, \min \, \Biggl\{
	\log \frac{|\S + \Z|}{|\Z|}, \log \frac{|\S + (\Hrd+\A) \Cr (\Hrd+\A)^{\he} + \Zd|}{|\Zd|}	 \Biggr\} \\
	\st \quad \Cr \psd \zeros, \quad \trace(\A \Cr \A^{\he}) \leq \PS - \trace(\S), \quad \trace(\Cr) \leq \PR
\end{multline}
with $\Z$ from~\eqref{eq:Z}.

\textit{Converse}: The converse of the proof is based on the so-called \emph{channel enhancement} technique, which was originally introduced in~\cite{Weingarten06CapacityRegionof}. From~\eqref{eq:defZ},~\eqref{eq:defZ2}, and the positive semidefiniteness of the Lagrangian multipliers $\lam_{1}, \lam_{2}$, it follows that
\begin{align}
	\Z \nsd \Zd, \qquad \Z \nsd \Zr.
\end{align}
Consequently, we can use $\Z$ to define an enhanced aligned Gaussian MIMO relay channel. In particular, let $\Zrt = \Z$ and
\begin{align}\begin{alignedat}{3}\label{eq:enhanced-alignedRC}
	\ntilde{\yr} &= \xs + \ntilde{\nr}, \qquad& 
	\ntilde{\nr} &\sim \CN(\zeros,\Zrt), \\
	\yd &= \xs + \Hrd \xr + \nd, \qquad& 
	\nd &\sim \CN(\zeros,\Zd).
\end{alignedat}\end{align}
Since $\Z \nsd \Zr$, $\yr$ is a stochastically degraded version of $\ntilde{\yr}$ so that $I(\u; \yr | \xr) \leq I(\u; \ntilde{\yr} | \xr)$ \red{for all feasible $p(\u,\xs,\xr)$}. Therefore,
\begin{multline}\label{eq:converseI}
	\rpdf \leq \ntilde{\rpdf} = \max_{p(\u,\xs,\xr)} \min\, \bigl\{ I(\u; \ntilde{\yr} | \xr) + I(\xs; \yd | \u,\xr), I(\xs, \xr; \yd) \bigr\} \\
	\st \quad \u \markov (\xs,\xr) \markov (\yd,\ntilde{\yr}), \quad \trace(\Cs) \leq \PS, \quad \trace(\Cr) \leq \PR,
\end{multline}
which explains why we call the relay channel defined in~\eqref{eq:enhanced-alignedRC} enhanced.

\red{Moreover, given $\xr$, $\yd$ is a stochastically degraded version of $\ntilde{\yr}$ as well. In fact, since $\Z \nsd \Zd$, the enhanced aligned Gaussian MIMO relay channel belongs to the class of stochastically degraded relay channels according to the definition in~\cite{Gerdes14OptimalPartialDecode-and-Forward}.} From~\cite[Proposition~1]{Gerdes14OptimalPartialDecode-and-Forward}, it hence follows that the optimal PDF strategy for the enhanced relay channel~\eqref{eq:enhanced-alignedRC} is equivalent to DF, i.e., $\ntilde{\rpdf} = \ntilde{\rdf}$ with
\begin{align}
	\ntilde{\rdf} = \max_{p(\xs,\xr)} \min\, \bigl\{ 
	I(\xs; \ntilde{\yr} | \xr), I(\xs, \xr; \yd) \bigr\} 
	\quad \st \quad \trace(\Cs) \leq \PS, \quad \trace(\Cr) \leq \PR.
\end{align}
However, the maximum achievable DF rate for the Gaussian MIMO relay channel is attained by jointly proper complex Gaussian channel inputs~\cite{Wang05CapacityofMIMO}, which essentially follows from the fact that the zero-mean proper (circularly symmetric) complex Gaussian distribution maximizes the differential entropy, cf.~\cite{Neeser93ProperComplexRandom, Telatar99CapacityofMulti-antenna}.

Therefore, the achievable PDF rate for the enhanced aligned Gaussian MIMO relay channel is maximized by letting $\q \sim \CN(\zeros,\Cq)$ and $\xr \sim \CN(\zeros,\Cr)$ be independent and $\xs = \q + \A \xr$ such that $\xs \sim \CN(\zeros,\Cs)$ with $\Cs = \Cq + \A \Cr \A^{\he}$, i.e.,
\begin{multline}
	\ntilde{\rpdf} = \max_{\Cq, \Cr, \A} \, \min \, \Biggl\{
	\log \frac{|\Cq + \Z|}{|\Z|}, \log \frac{|\Cq  + (\Hrd+\A) \Cr (\Hrd+\A)^{\he} + \Zd|}{|\Zd|}	 \Biggr\} \\
	\st \quad \Cq, \Cr \psd \zeros, \quad \trace(\Cq + \A \Cr \A^{\he}) \leq \PS, \quad \trace(\Cr) \leq \PR.
\end{multline}
Using a primal decomposition again, this maximization problem can equivalently be written as
\begin{align}\label{eq:primal-decomposition2}
	\ntilde{\rpdf} = \max_{\Cq} \, \ntilde{\rpdf(\Cq)} \quad \st \quad \Cq \psd \zeros, 
	\quad \trace(\Cq) \leq \PS,
\end{align}
where
\begin{multline}\label{eq:enhanced-aligned-OP}
	\ntilde{\rpdf(\Cq)} = \max_{\Cr, \A} \, \min \, \Biggl\{
	\log \frac{|\Cq + \Z|}{|\Z|}, \log \frac{|\Cq + (\Hrd+\A) \Cr (\Hrd+\A)^{\he} + \Zd|}{|\Zd|} \Biggr\} \\
	\st \quad \Cr \psd \zeros, \quad \trace(\A \Cr \A^{\he}) \leq \PS - \trace(\Cq), \quad \trace(\Cr) \leq \PR.
\end{multline}
Comparing~\eqref{eq:achievable-OP} to~\eqref{eq:enhanced-aligned-OP}, we notice that $\ntilde{\rpdf(\Cq)} = \rpdfn(\S)$ for $\Cq = \S$, from which we can directly conclude that $\ntilde{\rpdf} = \rpdfn$ as the constraints in~\eqref{eq:primal-decomposition} and~\eqref{eq:primal-decomposition2} are the same. But since $\ntilde{\rpdf} \geq \rpdf \geq \rpdfn$ in general, it follows that $\rpdf = \rpdfn$.
\end{proof}

\section{General Gaussian MIMO Relay Channel}

In this section, we extend the result that jointly proper complex Gaussian source and relay inputs maximize the achievable PDF rate to the general Gaussian MIMO relay channel. The main idea for extending the proof from the aligned to the general case is adopted from~\cite{Weingarten06CapacityRegionof}. First, we write the channel model of the Gaussian MIMO relay channel~\eqref{eq:RC} in an equivalent form with square channel gain matrices. In a second step, we use the singular value decomposition~(SVD) to enhance $\Hsr$ and $\Hsd$ by adding small perturbations to their singular values such that the resulting channel gain matrices are invertible. Finally, we show that the maximum achievable PDF rate for the original Gaussian MIMO relay channel can be obtained by a limit process on the maximum achievable PDF rate for the enhanced (perturbed) relay channel.

\begin{theorem}\label{thm:general}
For the Gaussian MIMO relay channel, the maximum achievable PDF rate is attained by jointly proper complex Gaussian source and relay inputs.
\end{theorem}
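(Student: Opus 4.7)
The plan is to reduce the general Gaussian MIMO relay channel to the aligned case of Theorem~\ref{thm:aligned} via two successive reductions---first making $\Hsr$ and $\Hsd$ square of equal size, then making them invertible through a small perturbation of their singular values---and then to transfer the Gaussian-optimality result back to the original channel by a limiting argument in the spirit of \cite{Weingarten06CapacityRegionof}. I would first arrange for $\Ns=\Nr=\Nd=N$ by padding: if a channel matrix has fewer rows than $N := \max\{\Ns,\Nr,\Nd\}$, the corresponding output vector is extended by independent auxiliary noise components; if it has fewer columns, the corresponding input vector is extended by inactive zero components. Neither modification alters any of the mutual information terms in~\eqref{eq:pdf-opt} or any power constraint, so the achievable PDF rate is unchanged and I may assume that $\Hsr$, $\Hsd$, $\Hrd$ are all $N\times N$.

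Next, using the SVDs $\Hsr = \Usr\Ssr\Vsr^{\he}$ and $\Hsd = \Usd\Ssd\Vsd^{\he}$, I would introduce the perturbed channel matrices $\Hsr(\epsilon) = \Usr(\Ssr+\epsilon\id)\Vsr^{\he}$ and $\Hsd(\epsilon) = \Usd(\Ssd+\epsilon\id)\Vsd^{\he}$ for $\epsilon > 0$. Both are invertible, so premultiplying the relay and destination receive signals of the perturbed channel by $\Hsr(\epsilon)^{-1}$ and $\Hsd(\epsilon)^{-1}$, respectively, produces an aligned Gaussian MIMO relay channel with still positive definite noise covariance matrices and a modified relay-to-destination matrix $\Hsd(\epsilon)^{-1}\Hrd$. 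Because invertible linear processing at the receivers preserves mutual information, Theorem~\ref{thm:aligned} applies, so the maximum achievable PDF rate $\rpdf(\epsilon)$ of the perturbed channel equals the best PDF rate $\rpdfn(\epsilon)$ attainable by jointly proper complex Gaussian inputs.

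Finally, I would take $\epsilon \downarrow 0$. For the Gaussian objective, continuity of the log-determinant expressions in $\epsilon$ together with the compactness of the feasible set of covariance matrices (imposed by the trace power constraints) yields $\rpdfn(\epsilon) \to \rpdfn$ as $\epsilon \to 0$. If one can correspondingly show that $\rpdf(\epsilon) \to \rpdf$, then combining these limits with $\rpdf(\epsilon) = \rpdfn(\epsilon)$ for every $\epsilon > 0$ gives $\rpdf = \rpdfn$, the desired claim.

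The main obstacle is the continuity statement $\rpdf(\epsilon) \to \rpdf$ for the general (non-Gaussian) PDF rate: $\rpdf$ is a supremum over an abstract class of joint distributions on $(\u,\xs,\xr)$, so one cannot argue continuity as directly as in the Gaussian case. I would expect to handle this either by invoking a Carath\'eodory-type bound on the alphabet of the auxiliary variable $\u$ and then using continuity of the mutual information functionals in the channel matrices over the resulting compact set, or, more in the Weingarten style, by constructing, for every achievable input distribution of the original channel, an input distribution for the perturbed channel whose PDF rate differs from the original one by an $\mathcal{O}(\epsilon)$ quantity that vanishes in the limit.
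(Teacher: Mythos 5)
Your reduction steps (zero-padding to square matrices, noise whitening, SVD-based perturbation of the singular values, and the observation that the perturbed channel is equivalent to an aligned channel to which Theorem~\ref{thm:aligned} applies) coincide with the paper's. The gap is in the final limiting step. You require the two-sided continuity statement $\rpdf(\epsilon)\to\rpdf$ for the \emph{general} (non-Gaussian) PDF rate, you correctly identify this as the main obstacle, and the remedies you sketch (a Carath\'eodory bound on the auxiliary alphabet plus continuity of the mutual information functionals, or an $\mathcal{O}(\epsilon)$ transfer of input distributions) are left unproven and are genuinely delicate: continuity of differential entropies of arbitrary input distributions under perturbation of the channel matrix is not something you can wave through, and a cardinality bound on $\u$ does not by itself compactify the space of joint distributions of the continuous inputs $(\xs,\xr)$.

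The paper avoids this entirely by exploiting that the perturbation is a channel \emph{enhancement}, so that only a one-sided, $\epsilon$-free inequality is needed. Writing $\Hsr=\Bsr\nbar{\Hsr}$ and $\Hsd=\Bsd\nbar{\Hsd}$ with $\Bsr,\Bsd\nsd\id$, one has $\Hsr^{\he}\Hsr\nsd\nbar{\Hsr}^{\he}\nbar{\Hsr}$ and $\Hsd^{\he}\Hsd\nsd\nbar{\Hsd}^{\he}\nbar{\Hsd}$, so the original outputs are stochastically degraded versions of the perturbed ones and hence $\rpdf\leq\nbar{\rpdf}$ for every $\epsilon>0$ by data processing --- no continuity of the general rate is invoked. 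Combining this with $\nbar{\rpdf}=\nbar{\rpdfn}$ (Theorem~\ref{thm:aligned} applied to the perturbed channel) and the trivial bound $\rpdf\geq\rpdfn$ gives the sandwich $\nbar{\rpdfn}\geq\rpdf\geq\rpdfn$, and the only limit that remains to be taken, $\nbar{\rpdfn}\to\rpdfn$ as $\epsilon\to 0$, involves nothing but explicit log-determinant functions of covariance matrices, where continuity is elementary. To repair your argument, replace your two-sided continuity requirement by this degradedness observation; as written, your proof is incomplete at its decisive step.
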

\begin{proof}
Without loss of generality, we may assume that $\Hsr, \Hsd, \Hrd \in \setC^{N \times N}$ with $N = \max \left\{ \Ns, \Nr, \Nd \right\}$. If this were not the case, we could augment the matrices with zeros to obtain square $N \times N$ channel gain matrices without changing the achievable PDF rate. Furthermore, we may also assume that $\Zr = \Zd = \id_{N}$ since any Gaussian MIMO relay channel with nonsingular noise covariances can be transformed into one with additive white Gaussian noise by means of a noise whitening operation, cf.~\cite{Goldsmith03CapacityLimitsof}.

\red{\emph{Achievability:}}
Let $\q \sim \CN(\zeros,\Cq)$, $\v \sim \CN(\zeros,\Cv)$, $\xr \sim \CN(\zeros,\Cr)$ be independent, $\u = \q + \A\xr$, and $\xs = \u + \v$ such that $\xs \sim \CN(\zeros,\Cs)$ with $\Cs = \Cq + \A \Cr \A^{\he} + \Cv$. Then, the PDF rate
\begin{multline}\label{eq:rpdf-achievable}
	\rpdfn = \max_{\Cq, \Cv, \Cr, \A} \, \rpdfn(\Cq, \Cv, \Cr, \A) \\
	\st \quad \Cq, \Cv, \Cr \psd \zeros, \quad \trace(\Cq + \Cv + \A \Cr \A^{\he}) \leq \PS, \quad \trace(\Cr) \leq \PR
\end{multline}
is achievable with jointly proper complex Gaussian source and relay inputs, where
\begin{align}\begin{split}
	\rpdfn(\Cq, \Cv, \Cr, \A) = \min \, \Biggl\{
	&\log \frac{|\id + \Hsr(\Cq + \Cv)\Hsr^{\he}|}
	{|\id + \Hsr \Cv \Hsr^{\he}|} 
	+ \log |\id + \Hsd \Cv \Hsd^{\he}|, \\
	&\log |\id + \Hsd(\Cq + \Cv)\Hsd^{\he} 
	+ \Ha(\A) \Cr \Ha(\A)^{\he}| \Biggr\}
\end{split}\end{align}
and $\Ha(\A) = \Hrd+\Hsd\A$. 

\textit{Converse}: Suppose the SVDs of $\Hsr$ and $\Hsd$ are given by
\begin{align}
	\Hsr = \Usr \Ssr \Vsr^{\he}, \qquad \Hsd = \Usd \Ssd \Vsd^{\he},
\end{align}
where $\Usr, \Usd, \Vsr, \Vsd \in \setC^{N \times N}$ are unitary and the diagonal matrices $\Ssr, \Ssd \in \setR_{+}^{N \times N}$ contain the singular values of $\Hsr, \Hsd$. For some $\varepsilon > 0$, let
\begin{align}
	\nbar{\Hsr} = \Usr (\Ssr + \varepsilon \id) \Vsr^{\he}, \qquad 
	\nbar{\Hsd} = \Usd (\Ssd + \varepsilon \id) \Vsd^{\he},
\end{align}
and consider the following enhanced Gaussian MIMO relay channel:
\begin{align}\begin{alignedat}{3}\label{eq:enhancedRC}
	\nbar{\yr} &= \nbar{\Hsr} \xs + \nr, \qquad& \nr &\sim \CN(\zeros,\id_{N}), \\
	\nbar{\yd} &= \nbar{\Hsd} \xs + \Hrd \xr + \nd, \qquad& \nd &\sim \CN(\zeros,\id_{N}).
\end{alignedat}\end{align}
As both $\nbar{\Hsr}$ and $\nbar{\Hsd}$ are invertible, this relay channel is equivalent to an aligned Gaussian MIMO relay channel with
\begin{align}\label{eq:equiv-noise}
	\Zr = (\nbar{\Hsr}^{\he} \nbar{\Hsr})^{-1}, \qquad
	\Zd = (\nbar{\Hsd}^{\he} \nbar{\Hsd})^{-1},
\end{align}
for which we know from Theorem~\ref{thm:aligned} that proper complex Gaussian channel inputs maximize the achievable PDF rate.

Furthermore, note that
\begin{align}\begin{alignedat}{3}\label{eq:enhanced-channels}
	\Hsr &= \Bsr \nbar{\Hsr}, \qquad&
	\Bsr &= \Usr \Ssr (\Ssr + \varepsilon \id)^{-1} \Usr^{\he} \nsd \id_{N}, \\
	\Hsd &= \Bsd \nbar{\Hsd}, \qquad&
	\Bsd &= \Usd \Ssd (\Ssd + \varepsilon \id)^{-1} \Usd^{\he} \nsd \id_{N}.
\end{alignedat}\end{align}
Since~\eqref{eq:enhanced-channels} is equivalent to
\begin{align}
	\Hsr^{\he} \Hsr \nsd \nbar{\Hsr}^{\he} \nbar{\Hsr}, \qquad
	\Hsd^{\he} \Hsd \nsd \nbar{\Hsd}^{\he} \nbar{\Hsd}
\end{align}
due to~\cite[Lemma~5]{Shang13Noisy-InterferenceSum-RateCapacity}, this means that $\yr$ and $\yd$ are stochastically degraded versions of $\nbar{\yr}$ and $\nbar{\yd}$, respectively, cf.~\eqref{eq:equiv-noise}. As a result, we have
\begin{align}
	\nbar{\rpdfn} = \nbar{\rpdf} \geq \rpdf \geq \rpdfn,
\end{align}
where
\begin{multline}\label{eq:enhanced-PDF}
	\nbar{\rpdfn} = \max_{\Cq, \Cv, \Cr, \A} \,
	\nbar{\rpdfn}(\Cq, \Cv, \Cr, \A) \\
	\st \quad \Cq, \Cv, \Cr \psd \zeros, \quad \trace(\Cq + \Cv + \A \Cr \A^{\he}) \leq \PS, \quad \trace(\Cr) \leq \PR
\end{multline}
is the maximum achievable PDF rate for the enhanced Gaussian MIMO relay channel defined in~\eqref{eq:enhancedRC},
\begin{align}\begin{split}
	\nbar{\rpdfn}(\Cq, \Cv, \Cr, \A) = \min \, \Biggl\{
	&\log \frac{|\id + \nbar{\Hsr}(\Cq + \Cv)\nbar{\Hsr}^{\he}|}
	{|\id + \nbar{\Hsr} \Cv \nbar{\Hsr}^{\he}|} 
	+ \log |\id + \nbar{\Hsd} \Cv \nbar{\Hsd}^{\he}|, \\
	&\log |\id + \nbar{\Hsd}(\Cq + \Cv)\nbar{\Hsd}^{\he} 
	+ \bar{\Ha}(\A) \Cr \bar{\Ha}(\A)^{\he}| \Biggr\},
\end{split}\end{align}
and $\bar{\Ha}(\A) = \Hrd + \nbar{\Hsd} \A$. The proof can hence be completed by showing that $\nbar{\rpdfn} \to \rpdfn$ as $\varepsilon \to 0$.

To this end, suppose $\Cq, \Cv, \Cr, \A$ are fixed. Then, $\nbar{\rpdfn}(\Cq, \Cv, \Cr, \A)$ is a continuous function of $\varepsilon$ because it is the pointwise minimum of two functions that are continuous in $\varepsilon$. As a consequence,
\begin{align}\label{eq:epsilon-limit}
	\lim_{\varepsilon \to 0} \, \nbar{\rpdfn}(\Cq, \Cv, \Cr, \A) 
	= \rpdfn(\Cq, \Cv, \Cr, \A).
\end{align}
Because~\eqref{eq:epsilon-limit} holds for any $\A$ and any positive semidefinite $\Cq, \Cv, \Cr$, it also holds for the maximizers $\Cq^{\star}, \Cv^{\star}, \Cr^{\star}, \A^{\star}$ of problem~\eqref{eq:enhanced-PDF}. In addition, these maximizers also satisfy the constraints of~\eqref{eq:rpdf-achievable}, which means that
\begin{align}
	\lim_{\varepsilon \to 0} \, \nbar{\rpdfn} = 
	\lim_{\varepsilon \to 0} \,\nbar{\rpdfn}(\Cq^{\star}, \Cv^{\star}, \Cr^{\star}, \A^{\star}) =
	\rpdfn(\Cq^{\star}, \Cv^{\star}, \Cr^{\star}, \A^{\star}) \leq \rpdfn.
\end{align}
But since $\nbar{\rpdfn} \geq \rpdfn$ in general, this implies $\lim_{\varepsilon \to 0} \, \nbar{\rpdfn} = \rpdfn$.
\end{proof}

\section{Discussion}\label{sec:discussion}

In this paper, we showed that the maximum achievable PDF rate for the Gaussian MIMO relay channel is always attained by jointly proper complex Gaussian source and relay inputs. The main challenge of proving this result (Theorem~\ref{thm:general}) was to establish that Gaussian channel inputs maximize the achievable PDF rate for the aligned Gaussian MIMO relay channel (Theorem~\ref{thm:aligned}). The general result then followed from a rather simple limiting argument.

\subsection{Comments on the Proof of Theorem~\ref{thm:aligned}}

The key to proving Theorem~\ref{thm:aligned} was to employ a primal decomposition approach in the achievability part. This is because using the primal decomposition approach leads to an inner maximization problem which is mathematically equivalent to the optimization problem that yields the secrecy capacity of the aligned Gaussian MIMO wiretap channel (vector Gaussian wiretap channel) under shaping constraints~\cite[Section~II-A]{Liu09NoteSecrecyCapacity}. In particular, we were able to obtain both the optimal value of this inner problem, which we used to simplify the PDF rate maximization problem in the achievability part, and the enhanced channel, which we required for the converse, from considerations similar to those in the proof of~\cite[Theorem~2]{Liu09NoteSecrecyCapacity}.

However, note that there is an important difference in how the channel enhancement argument was used in the converse parts of~Theorem~\ref{thm:aligned} and~\cite[Theorem~2]{Liu09NoteSecrecyCapacity}. Whereas the secrecy capacity of the aligned Gaussian MIMO wiretap channel was derived by enhancing the channel to the legitimate receiver, we enhanced the channel from the source to the relay, which plays the role of the eavesdropper in the inner problem~\eqref{eq:problemCv}. The reason for this is that enhancing the source-to-destination channel would not have yielded the desired converse for our purpose as the second mutual information term inside the minimum of~\eqref{eq:converseI} would also have increased. By enhancing the source-to-relay channel, this term remained unchanged. Moreover, we obtained a stochastically degraded Gaussian MIMO relay channel, for which it is known that the optimal PDF strategy is equivalent to DF~\cite[Proposition~1]{Gerdes14OptimalPartialDecode-and-Forward}.

Like for the aligned Gaussian MIMO wiretap channel, the existence and the properties of the enhanced channel can be explained by considering the special case of parallel subchannels, i.e., the special case where the noise covariances are diagonal, cf.~\cite[Section~III]{Liu09NoteSecrecyCapacity}. In particular, the achievable PDF rate for the parallel Gaussian relay channel is maximized if the relay decodes the entire information transmitted over the subchannels for which the relay receives a better signal than the destination, but no information sent over the subchannels where the relay's receive signal is worse~\cite{Liang07ResourceAllocationWireless}. That is, on subchannels where the source-to-relay channel is better than the source-to-destination channel, the optimal PDF strategy reduces to DF, whereas it is equivalent to P2P transmission if the source-to-destination channel is better than the source-to-relay channel. Therefore, an enhanced and stochastically degraded relay channel can be constructed as follows: For any subchannel where the destination receives a better signal than the relay, the noise variance of the relay is reduced to that of the destination. The resulting enhanced parallel Gaussian channel is stochastically degraded as, on every subchannel, the destination's receive signal is no better than the relay's receive signal. Furthermore, the maximum achievable PDF rate does not increase compared to the original parallel Gaussian relay channel, but it can now also be achieved by letting the relay decode the entire information transmitted by the source.

Following this line of thought, one can think of the enhanced aligned Gaussian MIMO relay channel~\eqref{eq:enhanced-alignedRC} as the relay channel that is obtained by reducing the noise covariance of the relay ``just enough'' in the sense that $\rpdf$ does not increase compared to the original aligned relay channel~\eqref{eq:alignedRC} and that it can be achieved by means of a pure DF strategy. Because the noise covariances $\Zr$ and $\Zd$ may have different eigendirections, finding the enhanced channel is more involved than for the parallel Gaussian relay channel. However, the fact that such an enhanced channel always exists can be concluded from the arguments we adopted from the work on the secrecy capacity of the aligned Gaussian MIMO wiretap channel, cf.~\cite[Theorem~2]{Liu09NoteSecrecyCapacity}.

\subsection{Interpretation of the Primal Decomposition Approach}

The matrix $\S \psd \zeros$, which was introduced as part of the primal decomposition approach, has a nice interpretation, cf.~\cite{Hellings14OptimalGaussianSignaling}. In fact, note that $\rpdf$ is achieved by a block Markov superposition encoding scheme (with one block memory) that uses $B$ blocks of transmission to convey $B-1$ independent messages from the source to the destination. In block $b$, the source splits its message $w_{b}$ into independent parts $w_{b}'$ and $w_{b}''$. The first part $w_{b}'$ is decoded by the relay, whereas the second part $w_{b}''$ is intended for the destination only. Assuming the relay encoder operates in a causal manner, the relay’s transmit signal $\xr$ in block $b$ is a function of the previous message $w_{b-1}'$, and provided that the relay encoding function is deterministic, $\xr$ is then also known to the source. If we let $\q$ and $\v$ be functions of the current message parts $w_{b}'$ and $w_{b}''$, respectively, the transmit signal $\xs(w_{b}', w_{b}'',w_{b-1}')$ of the source in block $b$ can be expressed as the superposition of the independent signal parts $\q(w_{b}')$, $\v(w_{b}'')$, and $\xr(w_{b-1}')$ given by
\begin{align}\label{eq:decomposition-xs}
	\xs = \q + \A \xr + \v.
\end{align}

\begin{figure}[t]
	\centering
	\begin{tikzpicture}[scale=0.5,>=latex]
		\tikzset{device/.style={%
			ellipse, line width=1.25pt, minimum width=3em, minimum height=3em, draw}}                     
		\tikzset{signal/.style=%
			{rectangle, inner sep=.1em, draw=none}}                     
		\tikzset{SP/.style=%
			{rectangle,line width=1pt,minimum width=.5em,minimum height=.5em, draw}}
		\node[device,minimum height=8.2em,minimum width=4em] (source) at (0,2) {};
		\draw (source)++(-1,0) node[] {S};
		\node[signal] (sourceAxr) at (0,4) {$\A\xr$};
		\node[SP] (sourceSP) at (0,2.5) {};
		\node[signal] (sourceq) at (0,1) {$\q$};
		\node[signal] (sourcev) at (0,0) {$\v$};
		\node[device,minimum width=4.95em] (relay) at (7.25,4) {};
		\draw (relay)++(0,0.65) node[] {R};
		\draw (relay)++(-.75,0) node[SP] (relaySP) {};
		\draw (relay)++(0.75,0) node[signal] (relayxr) {$\xr$};
		\node[device] (destination) at (8,0) {D};
		\draw[line width=1pt,->] (sourceq) to (sourceSP);
		\draw[line width=1pt,->] (sourceSP) to (sourceAxr);
		\draw[black!30!white,line width=1pt,->] (sourceAxr) to (relay);
		\draw[line width=1pt,->] (sourceq) to (relaySP.west);
		\draw[dashed,line width=1pt,->] (sourcev) to (relay);
		\draw[line width=1pt,->] (sourceAxr) to (7.5,2.25) to (destination);
		\draw[line width=1pt,->] (sourceq) to (destination);
		\draw[line width=1pt,->] (sourcev) to (destination);
		\draw[line width=1pt,->] (relaySP.east) to (relayxr);
		\draw[line width=1pt,->] (relayxr) to (destination);
		\node[signal] (coop) at (6.5,1.75) {\small coop.};
		\ifCLASSOPTIONdraftcls
			\draw[line width=1pt,->] (11,3.5) to ++(1.2,0) node[right]
				{\small useful signal};
			\draw[black!30!white,line width=1pt,->] (11,2.5) to ++(1.2,0) node[right]
				{\small \textcolor{black}{known interference\phantom{g}}};
			\draw[dashed,line width=1pt,->] (11,1.5) to ++(1.2,0) node[right]
				{\small interference\phantom{g}};
			\draw (11.6,0.5) node[SP] {} ++(0.6,0) node[right]
				{\small processing with delay};
		\else
			\draw[line width=1pt,->] (-3.1,-2) to ++(1.2,0) node[right]
				{\small useful signal};
			\draw[black!30!white,line width=1pt,->] 
				(-3.1,-2.8) to ++(1.2,0) node[right]
				{\small \textcolor{black}{known interference\phantom{g}}};
			\draw[dashed,line width=1pt,->] (4.1,-2) to ++(1.2,0) node[right]
				{\small interference\phantom{g}};
			\draw (4.7,-2.8) node[SP] {} ++(0.6,0) node[right]
				{\small processing with delay};
		\fi
	\end{tikzpicture}
	\caption{Decomposition of the Source Transmit Signal for the PDF Strategy}
	\label{fig:coding-pdf}
\end{figure}
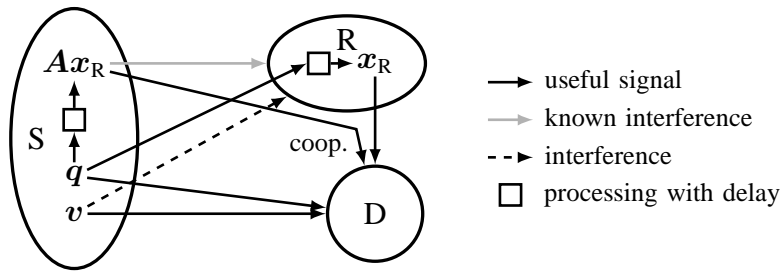
The meaning of these signal parts is illustrated in~Figure~\ref{fig:coding-pdf}:
\begin{itemize}
\item $\q$ contains the new information to be decoded by the relay,
\item $\A \xr$ denotes the cooperative part, which allows the source and the relay to cooperatively transmit the message part $w_{b-1}'$, which the relay has previously decoded, to the destination, and
\item $\v$ represents the new information not to be decoded by the relay, i.e., the part of the source message $w_{b}$ that is conveyed to the destination over the direct link only.
\end{itemize}
Note also that since $w_{b}''$ is not supposed to be decoded by the relay, $\v$ acts as interference at the relay. The matrix $\S = \Cq + \Cv$, which is defined in the primal decomposition approach, can hence be thought of as the covariance of the \emph{innovative part} of the source signal, whereas $\Cs - \S = \A\Cr\A^{\he}$ is the covariance of the \emph{cooperative part}. 

\subsection{Evaluation of $\rpdf$}
If one actually wants to evaluate $\rpdf$, it is not convenient to express the correlation of $\xs$ and $\xr$ by means of $\A$ and $\Cr$ because the corresponding maximization problem would then contain the product $\A \Cr \A^\he$ of two optimization variables. However, this issue can be avoided as follows. Instead of decomposing the source input into three independent signal parts, let
\begin{align}
	\xs = \u + \v
\end{align}
with $\v \sim \CN(\zeros,\Cv)$ being independent of both $\u \sim \CN(\zeros,\Cu)$ and $\xr \sim \CN(\zeros,\Cr)$, cf.~\cite{Gerdes13Zero-ForcingPartialDecode-and-Forward, Weiland13PartialDecode-and-ForwardRates}. Moreover, let $\check{\C}$ denote the joint covariance matrix of $\u$ and $\xr$, i.e., $\left[ \begin{smallmatrix} \u \\ \xr \end{smallmatrix} \right] \sim \CN(\zeros,\check{\C})$ with
\begin{align}
	\check{\C} = \begin{bmatrix}
		\Cu & \Cur \\ \Cur^{\he} & \Cr
	\end{bmatrix}.
\end{align}
Then, the correlation between $\xs$ and $\xr$ is specified by the cross-covariance matrix $\Cur$ and
\begin{align}
	I(\u; \yr | \xr) &= \log \frac{|\id + \Hsr(\Cugr + \Cv)\Hsr^{\he}|}{|\id + \Hsr \Cv \Hsr^{\he}|},
\end{align}
where $\Cugr = \Cu - \Cur \Cr^{+} \Cur^{\he}$ denotes the conditional covariance matrix of $\u$ given $\xr$. By introducing an auxiliary variable $\Q = \Cugr \psd \zeros$, relaxing the equality constraint to $\zeros \nsd \Q \nsd \Cugr$, and subsequently applying the Schur complement condition for positive semidefinite matrices~\cite[Appendix~A.5.5]{Boyd2004}, it can eventually be shown that
\begin{multline}\label{eq:rpdf}
\begin{aligned}
	\rpdf = \max_{\Q, \check{\C}, \Cv} \, \min \, \Biggl\{
	&\log \frac{|\id + \Hsr(\Q + \Cv)\Hsr^{\he}|}{|\id + \Hsr \Cv \Hsr^{\he}|} 
	+ \log |\id + \Hsd \Cv \Hsd^{\he}|, \\
	&\log |\id + \Hsd \Cv \Hsd^{\he} + \HSRd \check{\C} \HSRd^{\he}| \Biggr\}
\end{aligned} \\
	\st \quad \Q, \Cv \psd \zeros, \ \check{\C} - \Ds^{\he} \Q \Ds \psd \zeros, \
	\trace(\Cv + \Ds \check{\C} \Ds^{\he}) \leq \PS, \  
	\trace(\Dr \check{\C} \Dr^\he) \leq \PR,
\end{multline}
where
\begin{align}
	\Ds = \left[ \id_{\Ns}, \zeros_{\Ns \times \Nr} \right], \qquad
	\Dr = \left[ \zeros_{\Nr \times \Ns}, \id_{\Nr} \right].
\end{align}

In fact, the PDF rate maximization problem~\eqref{eq:rpdf} was already derived in~\cite{Gerdes13Zero-ForcingPartialDecode-and-Forward, Weiland13PartialDecode-and-ForwardRates}, but when those papers were written, it was not yet clear whether Gaussian channel inputs maximize the achievable PDF rate. Now, on the other hand, we can conclude from Theorem~\ref{thm:general} that~\eqref{eq:rpdf} yields the maximum achievable PDF rate for the Gaussian MIMO relay channel. Note also that~\eqref{eq:rpdf} becomes equivalent to~\eqref{eq:rpdf-achievable} if $\u = \q + \A\xr$ with $\q$ and $\xr$ being independent. In particular, we have $\Cugr = \Cq$ and
\begin{align}\label{eq:Ccheck}
	\check{\C} = \begin{bmatrix}
		\Cq + \A\Cr\A^{\he} & \A\Cr \\
		\Cr\A^{\he} & \Cr \end{bmatrix}
		= \Ds^{\he} \Cq \Ds + \begin{bmatrix} \A \\ \id \end{bmatrix} \Cr
		\begin{bmatrix} \A \\ \id \end{bmatrix}^{\he}
\end{align}
in this case. Furthermore, the auxiliary variable $\Q$ in problem~\eqref{eq:rpdf} is equal to the conditional covariance matrix $\Cugr$ in the optimum, i.e., the optimal auxiliary variable satisfies $\Q = \Cq$ if $\u = \q + \A\xr$. Therefore, the constraint $\check{\C} - \Ds^{\he} \Q \Ds \psd \zeros$ can be replaced by $\Cr \psd \zeros$, and the equivalence of problems~\eqref{eq:rpdf-achievable} and~\eqref{eq:rpdf} then simply follows from plugging~\eqref{eq:Ccheck} into~\eqref{eq:rpdf}.

Unfortunately, problem~\eqref{eq:rpdf} is still nonconvex due to the term $|\id + \Hsr \Cv \Hsr^{\he}|$ in the denominator of the first logarithm, which results form the fact that $\v$ must be considered as interference at the relay. To the best of our knowledge, an algorithm to compute the globally optimal solution of~\eqref{eq:rpdf} for the general case has yet to be derived, but suboptimal solution approaches have already been proposed. The zero-forcing (ZF) approach used in~\cite{Gerdes13Zero-ForcingPartialDecode-and-Forward} is based on canceling the interference the relay would suffer from, i.e.,  the part of the source input the relay is not supposed to decode. Another approach uses the so-called inner approximation algorithm~(IAA)~\cite{Marks78GeneralInnerApproximation}, which solves a sequence of approximating convex optimization problems instead of the original nonconvex one~\cite{Weiland13PartialDecode-and-ForwardRates}. For Rayleigh fading channels, both of these suboptimal PDF schemes can achieve considerable gains compared to DF and approach the CSB if the source is equipped with more antennas than the relay~\cite{Weiland13PartialDecode-and-ForwardRates, Gerdes13Zero-ForcingPartialDecode-and-Forward}. \red{Furthermore, it has been shown in~\cite{Gerdes14OptimalPartialDecode-and-Forwarda} that the ZF scheme yields the maximum achievable PDF rate $\rpdf$ if
\begin{align}
	\rank(\Hsr) + \rank(\Hsd) = \rank( [\Hsr^{\he}, \Hsd^{\he}] ),
\end{align}
i.e., for cases where the row spaces of the channel gain matrices $\Hsr$ and $\Hsd$ are disjoint, cf.~\cite{Marsaglia74EqualitiesandInequalities}.} We remark that this is one of the special cases for which it was already known that Gaussian channel inputs maximize the achievable PDF rate.

An interesting point that has not been examined so far is whether the primal decomposition approach, which turned out to be the key to proving Theorem~\ref{thm:aligned}, is also helpful for deriving an algorithm that solves the general PDF rate maximization problem~\eqref{eq:rpdf}. This question is left open for future research.

\subsection{Generalizations}

Theorems~\ref{thm:aligned} and~\ref{thm:general} remain valid if the source and relay power constraints are replaced by more general constraints such as shaping constraints. More specifically, as long as the constraints only depend on the joint covariance matrix of $\xs$ and $\xr$, we can still invoke the entropy maximizing property of the Gaussian distribution for the enhanced channel in the converse part of Theorem~\ref{thm:aligned} to show that Gaussian channel inputs maximize the achievable PDF rate for the aligned Gaussian MIMO relay channel. Moreover, the generalization of this result from the aligned to the general case does not depend on the constraints.

In addition, Theorems~\ref{thm:aligned} and~\ref{thm:general} can also be generalized to the Gaussian MIMO relay channel with general (proper or improper) complex Gaussian noise. To see this, first note that using the same arguments as for the complex-valued case, it can be proved that Gaussian channel inputs maximize the achievable PDF rate for the real-valued Gaussian MIMO relay channel. Applying this observation to a composite real representation, where the real and imaginary parts of complex signals are stacked in vectors of twice the dimension~\cite{Adali11Complex-ValuedSignalProcessing:}, we directly obtain that general complex Gaussian channel inputs maximize the achievable PDF rate in case of general complex Gaussian noise.

However, when taking the detour over the composite real representation, it is not easy to see whether the optimal channel inputs in the corresponding complex-valued system are proper or improper. This would have to be deduced from the structures of the covariance matrices of the composite real vectors. For the practically important case (cf.~\cite{Neeser93ProperComplexRandom}, for example) of proper complex Gaussian noise, we could use the argumentation from~\cite{Hellings14OptimalGaussianSignaling} to show that the channel inputs should also be proper. We avoided this involved argumentation in the proofs of Theorems~\ref{thm:aligned} and~\ref{thm:general} by restricting our considerations to proper Gaussian noise right from the beginning.

\bibliographystyle{IEEEtran}
\bibliography{IEEEabrv,IEEEconf,../../../Dissertation/dissertation_lege.bib}

\end{document}